\documentclass[10pt, twocolumn]{article}
\usepackage{latex8}
\usepackage{algorithmic}
\usepackage{algorithm}
\usepackage{epsfig}
\usepackage{subfigure}

\usepackage{times}
\usepackage{epsfig}
\usepackage{amssymb}
\usepackage{amsthm}
\usepackage{amsmath}
\usepackage{graphicx}
\usepackage{algorithm}
\usepackage{algorithmic}
\usepackage{setspace}
\usepackage{balance}

\addtolength{\textfloatsep}{-0.5cm}
\addtolength{\floatsep}{-0.5cm}
\addtolength{\abovecaptionskip}{-0.3cm}
\addtolength{\baselineskip}{-1cm}
\addtolength{\topskip}{-1cm}

\renewcommand{\Section}[1]{\vspace{-3pt}\section{\hskip -1em.~~#1}\vspace{-5pt}} 
\renewcommand{\SubSection}[1]{\vspace{-3pt}\subsection{\hskip -1em.~~#1}
     	\vspace{-5pt}}
\newcommand{\SubSubSection}[1]{\vspace{-3pt}\subsubsection{\hskip -1em.~~#1}\vspace{-5pt}}

\title{Towards a Decentralized Algorithm for Mapping Network and Computational
Resources for Distributed Data-Flow Computations}

\author{\textit{Shah Asaduzzaman} and  \textit{Muthucumaru Maheswaran} \\
Advanced Networking Research Lab \\
School of Computer Science \\
McGill University \\
Montreal, QC H3A 2A7, Canada \\
\texttt{\{asad,maheswar\}@cs.mcgill.ca}
}



\begin{document}


\pagestyle{empty}
\maketitle
\thispagestyle{empty}
\begin{abstract}
  Several high-throughput distributed data-processing applications require
  multi-hop processing of streams of data. These applications include
  continual processing on data streams originating from a network of
  sensors, composing a multimedia stream through embedding several
  component streams originating from different locations, etc. These
  data-flow computing applications require multiple processing nodes
  interconnected according to the data-flow topology of the
  application, for on-stream processing of the data. Since the
  applications usually sustain for a long period, it is important to
  optimally map the component computations and communications on the
  nodes and links in the network, fulfilling the capacity constraints
  and optimizing some quality metric such as end-to-end latency. The
  mapping problem is unfortunately NP-complete and heuristics have
  been previously proposed to compute the approximate solution in a
  centralized way.  However, because of the dynamicity of the network,
  it is practically impossible to aggregate the correct state of the
  whole network in a single node. In this paper, we present a
  distributed algorithm for optimal mapping of the components of the
  data flow applications. We propose several heuristics to minimize
  the message complexity of the algorithm while maintaining the
  quality of the solution.
\end{abstract}

\Section{Introduction}
Real-time processing of continuous data streams are becoming an
important component of data-flow intensive distributed applications.
In general these applications consist of a few cascades of
computational operations on several streams of data originating from
one or more sources and presenting a view of the processed data at one
or more sink nodes. Applications such as continual
query~\cite{Schwan2005} on the stream of information sent by a network
of sensors, composing a multimedia stream through several stages of
encoding, decoding and embedding~\cite{Gu2006, Baochun2004},
scientific workflow~\cite{Kepler2006}, etc.\ belong to this category. These
applications require several computational resources along the path
the data streams travel from the source to destination. In addition,
as each of these computations generate new data streams that are to
processed by other computations or to be delivered to the
destination. Sufficient network link bandwidth must be provided to
carry these data streams among source, destination and computational
nodes, so that the computations can proceed seamlessly. In this paper,
we deal with the problem of optimally allocating computational and
network resources for these distributed applications.

Usually the distributed computation operates for a long time after
being set up with all the necessary resources. So, it is important to
optimally acquire the resources before the operation starts. When
resources are requested for a distributed job, the topology that
interconnect the component nodes of the flow, i.e.\ the data sources,
the processing nodes and the destination, is known. In very general
terms, the interconnection topology can be an acyclic graph. However,
in most common cases the flow is a linear path or tree or a
series-parallel graph. We show in Section~\ref{subsec:npcomplete} that
even for a linear path-like flow, finding a mapping that computations
on processing nodes and data transmissions on network paths,
satisfying the processing capacity and bandwidth constraint, is an
NP-complete problem. In this paper, we develop a scheme to solve the
problem of mapping linear path-like computation on an arbitrary
resource network.

The problem of establishing a path between a source and a destination
node in an arbitrary network, subject to some end-to-end quality
constraints, has been a topic for active research for a long time.  If
such path is to be established to satisfy one additive quality
requirement such as delay or hop-count, the problem can easily be
solved by Dijkstra's shortest path algorithm. Even if some end-to-end
min-max constraint such as bandwidth need to be satisfied, still the
problem can be solved easily using Wang and Crowcroft's
shortest-widest path algorithm~\cite{Crowcroft1996}. However, it is
well known that establishing a path satisfying more than one additive
quality constraints is an NP-hard problem~\cite{Nahrstedt1998,
Sahni2006}. It is important to note that the problem of finding a
mapping for a data-flow computation requires more than end-to-end
constraints, because computational capacity of each of the nodes need
to be individually satisfied.

Due to the inherent complexity of the optimization problem, several
workable heuristic solutions have been proposed in different contexts.
A recursive mapping on a hierarchy of node-groups in the resource
networks is applied in~\cite{Schwan2005}. In~\cite{Baochun2004}
and~\cite{Gu2006}, mapping is performed after pruning the whole
resource network into a subset of compatible resources. The solution
by Liang and Nahrstedt~\cite{Nahrstedt2006} is closest to ours. One of
the assumptions made by Liang and Nahrstedt was that the optimization
algorithm was executed in a single node and complete state of the
resource network is available to that node before execution. In a
large scale dynamic network this assumption is hard to realize. If we
assume that each node in the resource network is aware of the state of
its immediate neighborhood only, we need to compute the solution using
a distributed algorithm. In this paper we present a distributed
algorithm to solve the problem, which is a dynamic programming based
extension of the distributed Bellman-Ford algorithm.

The rest of the paper is organized as follows. In
Section~\ref{sec:problem} of this paper we formally define the
resource allocation problem as a constrained graph mapping
problem. The Bandwidth Constrained Path Mapping (BCPM) problem that covers
most of the practical applications, is then defined as a special case
of the general graph mapping problem. We provide a formal proof of
NP-completeness of the BCPM problem in the same section. In
Section~\ref{sec:algorithm}, centralized and decentralized algorithms
to solve the BCPM problem are developed. A guideline for designing
cost-effective heuristics to obtain approximate solutions to the
problem is provided at the end of the same section. The discussion is
then summarized with directions for possible future extensions in
Section~\ref{sec:conclusion}.

\Section{Problem Formulation}
\label{sec:problem}
In this section we formally define the problem of capacity constrained
mapping of dataflow computations on arbitrary networks. Any
distributed dataflow computation can be defined using three types of
nodes and interconnection between them. {\em Source nodes} are the
data sources originating the data streams. {\em Computing nodes} are
places where some computational operation on one or more incoming
data-stream is performed continually, and an output stream is
generated. {\em Sink nodes} are the places where the resulting flow
from the computation is presented. In a very general case, a dataflow
computation consists of one or more source nodes, one or more sink
nodes and zero or more computing nodes. The topology of data-flow
among these nodes is a directed acyclic graph (DAG). Although, theoretically
it is possible to have dataflow computations that have loops or
cycles, there will be finite number of iterations
of the data through the cycles and these iterations can be expanded
into finite acyclic graphs. In most common cases however, the dataflow
topology is a simple path consisting of a series of computing nodes, or
a tree where data-streams from multiple sources merged through several
steps and presented at a single sink. 

The network of computing and data-forwarding resources where the
distributed dataflow computation is to be instantiated can be
represented by an arbitrary graph. We denote this graph as resource
graph. Each node of the resource graph has a certain computational
capacity and each edge (link) of the resource graph has certain data
transmission capacity or bandwidth. In addition, each link may have
one or more additive quality metric, such as latency, jitter, etc.

\SubSection{Capacity Constrained Graph Mapping Problem}
In order to launch the distributed application on the network of
computers, we need to map the dataflow-DAG onto the resource graph such
that the computational and transmission requirements are fulfilled. If
there is more than one such feasible mapping, one would like to
choose the mapping that has minimum end-to-end delay on the resource
network.

More formally, we need to map a dataflow-DAG $G_J = (V_J,E_J)$ on to a
resource graph $G_R = (V_R, E_R)$. For each vertex $v_R \in V_R$, an
available computational capacity $C_{av}(v_R)$ is given. For each edge
$e_R \in E_R$, an available bandwidth $B_{av}(e_R)$ is given. In
addition, each edge $e_R \in E_R$ has an additive weight. For each
vertex $v_J \in V_J$, a computational requirement $C_{req}(v_J)$, and
for each edge $e_J \in E_J$, a bandwidth requirement $B_{req}(e_J)$ is
defined. There is a set of designated source nodes
$S_J \subset V_J = \{s_{1J},s_{2J}, ... ,s_{mJ}\}$ and a set of sink nodes $T_J
\subset V_J = \{t_{1J},t_{2J}, ... t_{nJ}\}$, such that $S_J \cap T_J = \phi$.

The bandwidth constrained DAG-mapping problem (BCDM) is to find a
mapping $M : V_J \rightarrow V_R$. For each source node $s_{iJ}$,
$M(s_{iJ}) = s_{iR}$ and for each sink node $t_{iJ}$, $M(t_{iJ}) =
t_{iR}$ are already given. It is important to note that multiple nodes
of the dataflow-DAG can map onto single node of the resource graph and
a single edge in the dataflow-DAG can span along a multi-hop path in
the resource graph.  So, defining the $V_J \rightarrow V_R$ mapping is
not sufficient to define the mapping of complete dataflow-DAG. In
addition to vertex mapping, another mapping $M_e: E_J \rightarrow P_R$ is
needed, where $P_R$ is the set of all possible paths in the resource
graphs, including zero length paths. Zero length paths are $(v,v)$
edges with infinite bandwidth and zero latency. Again, it is possible
that for two different edges, $e_1, e_2 \in E_J$, the mapped paths
$p_1 = M_e(e_1)$ and $p_2 = M_e(e_2)$ may have some common edges.

The mapping should fulfill the following constraints --

\begin{eqnarray}
\forall{v_R \in M(V_J)} &&\nonumber\\
\sum_{\{ v_J | v_J \in V_J, M(v_J) = v_R\}} {C_{req}(v_J)} 
     &\leq & C_{av}(v_R) \nonumber
\end{eqnarray} 
$$\forall {e_J = (u,v) \in E_J},$$ 
$$B(e_J) \leq min[ B(e_r), e_r \in M_e(e_J)]$$ 
We call this problem as Bandwidth Constrained DAG Mapping problem (BCDM).

When each edge $e_r \in E_r$ in the resource graph has an additive
metric $D(v_r)$, such as delay, cost, jitter, etc., we would like to find
the feasible mapping that minimizes the total cost
\begin{displaymath}D = \sum_{u,v\in V_j} \end{displaymath}

\begin{figure}[htbp]
\centering
\includegraphics{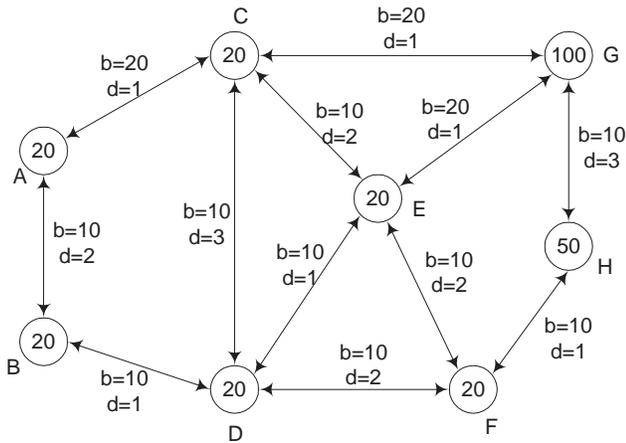}
\caption{An example resource network}
\label{fig:resource_graph}
\end{figure}

\begin{figure}[htbp]
\centering
\includegraphics{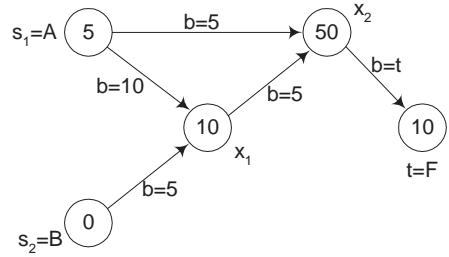}
\caption{An example data-flow computation with a DAG topology}
\label{fig:problem_dag}
\end{figure}

Figure~\ref{fig:resource_graph} shows an example resource network of
eight interconnected computing nodes. Computational capacity of each
node is represented by a number inside the node. The link bandwidth
and latency are mentioned on each edge. Figure~\ref{fig:problem_dag}
shows a dataflow-DAG containing $2$ source nodes $s_1$ and $s_2$, $2$
computing nodes $x_1$ and $x_2$, and one sink node $t$. $s_1$, $s_2$,
and $t$ must be mapped on resource node $A$, $B$, and $F$,
respectively. Each node in the dataflow-DAG has some processing
capacity requirement which is mentioned inside the node. Each link is
also annotated with a bandwidth requirement. A feasible mapping of
this dataflow-DAG on the resource graph is --

\begin{tabular}[h!]{c|c}
\begin{minipage}[t]{0.4\linewidth}
\begin{eqnarray*}
M(s_1) & = & A\\
M(s_2) & = & B\\
M(x_1) & = & E\\
M(x_2) & = & G\\
M(t)   & = & H\\
\end{eqnarray*}
\end{minipage}
&
\begin{minipage}[t]{0.48\linewidth}
\begin{eqnarray*}
M_e(s_1,x_1) & = & (A,C,E)\\
M_e(s_2,x_1) & = & (B,D,E)\\
M_e(x_1,x_2) & = & (E,G)\\
M_e(s_1,x_2) & = & (A,C,G)\\
M_e(x_2,t)   & = & (G,H,F)
\end{eqnarray*}
\end{minipage}
\end{tabular}

\SubSection{Constrained Path Mapping Problem}
Although in very general terms the dataflow computation resembles a
DAG topology, in most practical cases the topology is a simple path.
Given that the mapping of a DAG efficiently on the resource network
with all the constraints satisfied is hard to solve, it is useful to to
tackle the simpler problem of bandwidth constrained path mapping
problem (BCPM) first. In BCPM, the topology of the data flow
computation is restricted to a directed loop-free path, with a single
source and a single sink.

Precisely, we are given a dataflow path $P_J = (V_J, E_J)$, $V_J =
{v_0=s, v_1, v_2, ..., v_m = t}$ and $E_J = \{e_i = (v_i, v_{i+1}) | 0
\leq i < m \}$ to map on the resource graph $G_R = (V_R, E_R)$ defined
in the previous section. Each node $v_i, 0 \leq i \leq m$ of the
program path has a computational capacity requirement $C_{req}(v_i)$,
and each edge $e_i = (v_i, v_{i+1}), 0 \leq i < m$ has a bandwidth
requirement $B_{req}(e_i)$. We need to find the mappings $M: V_J
\rightarrow V_R$ and $M_e:E_J\rightarrow E_R$ that satisfies the
constraints. Mapping of $s$ and $t$ is already given.

\begin{figure}[htbp]
\centering
\includegraphics{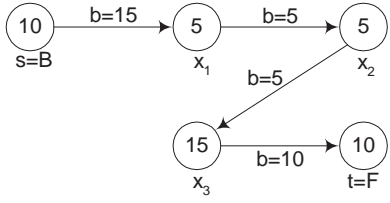}
\caption{An example data-flow computation with a path topology}
\label{fig:problem_path}
\end{figure}

An example dataflow path with one source $s$, one sink $t$ and three
computational nodes $x_1$, $x_2$, $x_3$ is shown in
Figure~\ref{fig:problem_path}, with the node capacity and bandwidth
requirements. $s$ and $t$ must be mapped on $B$ and $F$,
respectively. There can be many feasible mappings of this dataflow
computation on the resource graph in
Figure~\ref{fig:resource_graph}. One of them is --

\begin{tabular}[h!]{c|c}
\begin{minipage}[t]{0.4\linewidth}
\begin{eqnarray*}
M(s) & = & B\\
M(x_1) & = & B\\
M(x_2) & = & B\\
M(x_3) & = & D\\
M(t)   & = & F\\
\end{eqnarray*}
\end{minipage}
&
\begin{minipage}[t]{0.48\linewidth}
\begin{eqnarray*}
M_e(s,x_1) & = & (B,B)\\
M_e(x_1,x_2) & = & (B,B)\\
M_e(x_2,x_3) & = & (B,D)\\
M_e(x_3,t) & = & (D,F)
\end{eqnarray*}
\end{minipage}
\end{tabular}
which is also optimal in terms of total end-to-end latency of the
resource nodes $M(s)$ and $M(t)$.

\SubSection{Computational Complexity of the Problem}
\label{subsec:npcomplete}
We will now prove that BCPM problem is NP-complete. Since, BCPM is a
special case of BCDM, NP-completeness of BCPM iplies that BCDM is an
NP-hard problem. The NP-completeness proof of the BCPM problem is
constructed by transformation of the Longest Path
problem~\cite{Garey1979}. Definition of the decision version of the
Longest Path problem is as follows -

{\em Instance}: A graph $G = (V, E)$, a length function $l:E\rightarrow Z^+$,
specified vertices $s,t \in V$ and a positive integer $K$. 
{\em Question}: Is there an $(s \leadsto t)$ simple path $P \subseteq
G$ such that $\sum_{e \in P}{l(e) \geq K}$ ?

It is known that Longest Path problem is NP-complete, even for a
special case, where $\forall_{e \in E}l(e) = 1$~\cite{Garey1979}. We
will show that any instance of this special Longest Path problem can
be polynomially transformed into an instance of BCPM.

\SubSubSection{Longest Path $\propto$ BCPM}
We construct an instance of BCPM as follows - 

We take $G_R(V_R,E_R) = G(V,E)$, $\forall_{v \in V_R} C_{av}(v) = 1$,
$\forall_{e \in E_R} B_{av}(e) = 1$. Take a simple path $P_J = (V_J,
E_J)$ such that $|V_J| = K$, $\forall_{v \in V_J} C_{req}(v) = 1$ and
$\forall_{e \in E_J} B_{req}(e) = 1$. 

Now, if there is a simple $(s\leadsto t)$ path of length $\geq K$ in
G, then that path must have K hops, since $\forall_{e \in E}l(e) =
1$. Therefore, we can map $P_J$ along the corresponding path
$P_J\prime$ in $G_R$. If $|P_J\prime| > K$, then we can map first
$K-1$ nodes of $P_J$ on $P_J\prime$ and map the remaining edge
$u_{K-1}, u_K$ on the $v \leadsto t$ subpath of $P_J\prime$, where
$u_{K-1}$ is mapped on $v$. 

Given a mapping of the path $P_J$ on a path $P_J\prime \subseteq G_R$
that satisfies the capacity and bandwidth requirement constraints,
$|P_J\prime|$ must be $>=K$, because no two vertices of $P_J$ can be
mapped on a single vertex of $|P_J\prime|$ given the abovementioned
capacity constraints.

\SubSubSection{BCPM $\in$ NP}
Given an arbitrary mapping $M:V_J \rightarrow V_R$ one can
polynomially verify - 
\begin{itemize} 
\item Whether $C_{req}(v) \leq C_{av}(M(v))$, for all $v \in V_J$.
\item For each edge $(u,v) \in P_J$, whether there is a $(M(u)
  \leadsto M(v))$ path in $G_R$ that satisfies the bandwidth constraint
  of $(u,v)$ (Similar to bandwidth constrained shortest path
  problem~\cite{Crowcroft1996}).
\end{itemize}

This completes the proof that $BCPM \in NP$-$C$.

\Section{Algorithm for path mapping problem}
\label{sec:algorithm}
To solve the BCPM problem, we developed an algorithm using the the
Bellman-Ford relaxation scheme. First, we present the centralized
version of the algorithm, where the whole mapping is computed by a
single node that has knowledge of the state of the whole network of
nodes. Later, we explain the development of the distributed algorithm
based on this centralized one.

This algorithm works by relaxing along each edge of the resource graph
$N-1$ times, where $N = |V_R|$, the number of nodes in the resource
graph. For each node $u$ of the resource graph, a set of feasible
mappings of different length prefixes of the dataflow-path on any
resource path from the source node $s$ to the current node, is
maintained. In each relaxation along an $(u,v)$ edge, any new feasible
map on $(s \leadsto u)$ is extended in all possible ways, to complete
the list of feasible maps of dataflow path-prefixes on the resource
path $(s \leadsto u, v)$ and these new partial mappings are added to
the set maintained for node $v$. After $N-1$ iterations of relaxation
of all edges, the map set maintained for terminal node $t$ contains
all the feasible mappings of the dataflow-path on any $(s\leadsto
t)$ resource path. The algorithm is presented in
Algorithm~\ref{alg:pathmap_central_main},
\ref{alg:pathmap_central_relax}
and~\ref{alg:pathmap_central_extend}. A formal proof of the
correctness of the algorithm is presented in the following
sub-section. Lines $10$-$12$ of the subroutine Relax is added to
terminate the algorithm as soon as one feasible $(s\leadsto t)$
mapping is found. These lines should be omitted when optimal mapping
is sought.

We have computed the computational complexity of the algorithm in
Section~\ref{subsec:complexity}. The complexity is bounded by
polynomial of the size of the partial map set $S$, although the set
size is exponential.  The problem being NP-hard, it is impossible to
have a polynomially bounded optimal algorithm. However, heuristics may
be applied to produce sub-optimal solutions within a tractable amount
of complexity.  A good way of designing such heuristics is to restrict
the size of the map-set in some way. In Section~\ref{subsec:heuristic}
we have discussed several possible heuristics to solve the BCPM
problem. Note that because the set of partial map is stored in each
node, the memory complexity of the algorithm becomes exponential
too. This can be avoided by omitting the storage of partial maps. Each
partial map need to be stored for one iteration of relaxation only. If
partial maps are deleted after relaxation, the set size never grows
beyond $O(dp)$, where, $d$ is the average indegree of a node in
resource graph and $p = |P_J|$ is the number of nodes in the dataflow
path.

\begin{algorithm}[h]
     \begin{algorithmic}[1]
     \FOR{$x=0$ to $|P_J|-1$}
       \IF{$\sum_{0 \leq k \leq x}{C_{req}(k)} \leq C_{av}(s)$}
         \STATE{$M(s,x) = \{m|m$ maps initial $x$ nodes of $P_J$ on $s\}$} 
       \ELSE
         \STATE{\textbf{break}}
       \ENDIF
     \ENDFOR

     \FOR{each vertex $v \in V_R-{s}$}
       \FOR{$i=0$ to $|P_J|$}
         \STATE{$M(v,i) = \phi$}
       \ENDFOR
     \ENDFOR

     \FOR{$i = 1$ to $|V_R| - 1$}
       \FOR{each edge $ e = (u,v) \in E_R $}
         \STATE{Relax(u,v)} 
       \ENDFOR
     \ENDFOR

    \end{algorithmic}
     \caption{\em Pathmap($P_J$, $G_R$)}
  \label{alg:pathmap_central_main}
\end{algorithm}

\begin{algorithm}[h]
    \begin{algorithmic}[1]
     \FOR{$j = 0$ to $|P_J|$}
       \STATE{$M_{tmp}(j) = null$}
     \ENDFOR

     \FOR{$j = 0$ to $|P_J|-1$}
       \IF {$B_{req}(j, j+1) \leq B_{av}(u,v)$} 
         \FOR{each new mapping $m \in M(u,j)$ in the last iteration}
           \IF{$v == t$}
	     \STATE{$m_x = $ Extend($m$, j, $|P_J|-j$, v)}
	     \STATE{$M(v,|P_J|) = M(v,|P_J|) \cup m_x$}
	     \IF{$M(v,|P_J|) \neq \phi$}
	        \STATE{terminate the algorithm with $M(v,|P|)$ as
                   result}
	     \ENDIF	
	   \ELSE
	     \FOR{$x=0$ to $|P_J|-j-1$} 
	       \STATE{$m_x = $ Extend(m, j, x, v)}
	       \IF{$m_x \neq null$}
	         \STATE{$M(v,j+x) = M(v,j+x) \cup m_x$}
	       \ELSE
	         \STATE{\textbf{break}}    
	       \ENDIF
             \ENDFOR
	   \ENDIF
	   \STATE{mark $m$ as old}
         \ENDFOR
       \ENDIF	 
      \ENDFOR 
    \end{algorithmic}
  \caption{\textbf{subroutine} {\em Relax(u,v)}}
  \label{alg:pathmap_central_relax}
\end{algorithm}

\begin{algorithm}[h]
    \begin{algorithmic}[1]
     \IF{$\sum_{1 \leq k \leq x}{C_{req}(j+k)} \leq C_{av}(v)$}
       \STATE{extend $m$ by putting computations $\{j+1, j+2,
     ..., j+x\}$ in node $v$} 
       \STATE{let $m_x$ be the extended mapping} 
     \ELSE
       \STATE{$m_x = null$}
     \ENDIF
     \RETURN{$m_x$}
     \end{algorithmic}  
    \caption{\textbf{subroutine} {\em Extend(m, j, x, v)}}
    \label{alg:pathmap_central_extend}
\end{algorithm}

\SubSection{Correctness of BCPM algorithm}
\label{proof_correctness}
In this section we give a formal proof that when BCPM algorithm
terminates, $M(t,|P_J|)$ always contains a feasible mapping of $P_J$ on
$G_R$ if and only if such a feasible mapping exists.

\newtheorem{thm}{Theorem}[section]
\newtheorem{lem}[thm]{Lemma}
\newtheorem{cor}[thm]{Corollary}

\begin{lem}
\label{lemma_1}
If $M(u) = \bigcup_{\forall{j}} M(u,j)$ contains all feasible mappings
of different length prefixes of $P_J$ on an path $(s \leadsto u) \in
G_R$, then after computing $\textbf{Relax}(u,v)$, $M(v)$ includes all
feasible mappings of different length prefixes of $P_J$ on the path $(s
\leadsto u, v) \in G_R$.
\end{lem}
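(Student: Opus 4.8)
The plan is to prove the lemma by a double-inclusion argument on the set $M(v)$, relying on the definition of the subroutines \textbf{Relax} and \textbf{Extend}. The two directions are: (i) every mapping added to $M(v)$ by $\textbf{Relax}(u,v)$ is a feasible mapping of some prefix of $P_J$ on a path $(s \leadsto u,v)$; and (ii) conversely, every feasible mapping of a prefix of $P_J$ on a path of the form $(s \leadsto u, v)$ is present in $M(v)$ after the call.

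For direction (i), I would take an arbitrary $m_x$ that \textbf{Relax}(u,v) adds to some $M(v,j+x)$. By inspection of the code, $m_x = \textbf{Extend}(m,j,x,v)$ for some $m \in M(u,j)$, and $m_x \neq null$. By hypothesis $m$ is a feasible mapping of the $j$-node prefix of $P_J$ on some path $(s\leadsto u)$. The edge test $B_{req}(j,j+1)\le B_{av}(u,v)$ guarantees the bandwidth constraint on the newly used resource edge $(u,v)$ (note $P_R$ includes zero-length paths, so the degenerate case $u=v$ is handled by the infinite-bandwidth convention), and the capacity test inside \textbf{Extend} guarantees $\sum_{1\le k\le x} C_{req}(j+k)\le C_{av}(v)$, i.e.\ the extra computations placed on $v$ respect its residual capacity. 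Hence $m_x$ is a feasible mapping of the $(j+x)$-node prefix on the path $(s\leadsto u,v)$, which is what we need.

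For direction (ii), I would start from an arbitrary feasible mapping $\mu$ of the $\ell$-node prefix of $P_J$ on a path $(s\leadsto u,v)$. Let $j$ be the number of prefix nodes $\mu$ places on the sub-path $(s\leadsto u)$ and $x = \ell - j$ the number it places on $v$; feasibility of $\mu$ forces the $j$-node restriction $\mu|_{(s\leadsto u)}$ to itself be feasible, so by hypothesis it lies in $M(u,j)$. Then I argue that when $\textbf{Relax}(u,v)$ processes this mapping, the bandwidth guard passes (since $\mu$ already satisfies $B_{req}(j,j+1)\le B_{av}(u,v)$), the loop reaches index $x$ (the \texttt{break} inside the loop only triggers when \textbf{Extend} returns $null$, which by monotonicity of the partial sums $\sum_{1\le k\le x}C_{req}(j+k)$ cannot have happened for any $x'\le x$ since $\mu$ certifies the sum for $x$ is admissible), and \textbf{Extend}$(m,j,x,v)$ therefore returns the extension, which is added to $M(v,j+x)=M(v,\ell)$. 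The case $v=t$ is handled by the separate branch of \textbf{Relax}, where $x$ is forced to $|P_J|-j$ and the same reasoning applies.

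The main obstacle I anticipate is the bookkeeping around the ``new mapping'' / ``old mapping'' markers: \textbf{Relax}(u,v) iterates only over mappings in $M(u,j)$ that are new since the last iteration, so the lemma as literally stated (``contains all feasible mappings'') must be read in the cumulative sense consistent with how it is invoked in the outer $|V_R|-1$ iteration loop of Algorithm~\ref{alg:pathmap_central_main}. I would make this precise by phrasing the invariant as: after the relevant relaxation, $M(v)$ contains every feasible prefix-mapping on a path $(s\leadsto u,v)$ whose $(s\leadsto u)$-restriction was already in $M(u)$ before the call — and then note that the global correctness theorem, which chains this lemma over all edges and all $N-1$ rounds (so that every simple resource path of length $\le N-1$ is eventually covered), absorbs the incremental subtlety. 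The remaining points — monotonicity of capacity partial sums justifying the early \texttt{break}, and the zero-length-path convention covering $u=v$ — are routine and I would state them without extended calculation.
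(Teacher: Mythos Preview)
Your proposal is correct and follows essentially the same approach as the paper: decompose any feasible $k$-prefix mapping on $(s\leadsto u,v)$ into a $j$-prefix mapping on $(s\leadsto u)$ that lies in $M(u)$ by hypothesis plus the placement of the remaining $k-j$ nodes on $v$ via \textbf{Extend}, and observe that \textbf{Relax} enumerates all such extensions. The paper's own proof is much terser---it does not treat your direction~(i), the monotonicity argument justifying the early \texttt{break}, or the new/old marker subtlety---so your version is strictly more careful, but the core decomposition idea is identical.
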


\begin{proof}
By the construction of the $\textbf{Relax}(u,v)$
subroutine, each mapping $m \in M(u,j)$, of a $j$-length prefix of
$P_J$ on a $(s \leadsto u)$ path, is extended over the $(u,v)$ edge
exactly once. Any possible mapping of a $k$-length prefix of $P_J$ on
the $(s \leadsto u, v)$ path can be divided into 2 sub-mappings: a
mapping of $j$-length prefix $(j \leq k)$ of $P_J$ on $(s \leadsto u)$
path and a mapping of the following $k-j$ vertices of the $k$-length
prefix on $v$. Since all feasible sub-mappings of the first kind is
included in $M(u)$ and all the extensions of the second kind is
considered in lines $8$ to $14$ and $15$ to $22$ of
$\textbf{Relax}(u,v)$, $M(v)$ contains all feasible mappings of any
prefix of $P_J$ on $(s \leadsto u, v)$ paths.
\end{proof}

\begin{lem}
\label{lemma_2}
For any node $v \in V_R$ if there is a $s \leadsto v$ path $(v_0 = s,
v_1, v_2, ..., v_k = v)$ of length $k$, after $k$th iteration of the
outer for loop in line $7$ of the $PathMap$ algorithm, all feasible
mappings of different length prefixes of $P_J$ on the $(v_0 \leadsto
v_k)$ path has been recorded in $M(v)$.
\end{lem}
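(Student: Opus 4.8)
The plan is to prove Lemma~\ref{lemma_2} by induction on $k$, the length of the $(s \leadsto v)$ path, using Lemma~\ref{lemma_1} as the engine for each inductive step. The statement to establish is that after the $k$th pass of the outer loop (line~7 of \emph{PathMap}), the set $M(v_k)$ contains every feasible mapping of every prefix of $P_J$ on the path $(v_0=s, v_1, \dots, v_k=v)$.

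First I would set up the base case $k=0$: here $v_0 = s$, and the initialization loop (lines~1--7 of \emph{PathMap}, before any relaxation pass) explicitly records in $M(s,x)$ every mapping of an initial $x$-node prefix of $P_J$ onto $s$, for every $x$ such that the cumulative capacity constraint $\sum_{0\le k\le x} C_{req}(k) \le C_{av}(s)$ holds — and these are exactly the feasible prefix mappings on the zero-length path $(s)$. So $M(v_0) = M(s)$ satisfies the claim vacuously, with the convention that "after the $0$th iteration" means "after initialization." Second, for the inductive step I would assume the claim holds for the prefix path $(v_0, \dots, v_{k-1})$ after iteration $k-1$: $M(v_{k-1})$ already contains all feasible prefix mappings on $(v_0 \leadsto v_{k-1})$. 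During iteration $k$, the edge $e = (v_{k-1}, v_k) \in E_R$ is relaxed (every edge is relaxed once per iteration, line~8--9), so \textbf{Relax}($v_{k-1}, v_k$) is invoked with $M(v_{k-1})$ in the state guaranteed by the induction hypothesis. Applying Lemma~\ref{lemma_1} with $u = v_{k-1}$ and $v = v_k$, after this relaxation $M(v_k)$ includes all feasible mappings of all prefixes of $P_J$ on the path $(v_0 \leadsto v_{k-1}, v_k)$, which is exactly the claim for $k$.

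The step I expect to be the main obstacle is monotonicity across iterations and the subtle "new mapping" bookkeeping in \textbf{Relax}: the relaxation subroutine only extends mappings marked \emph{new} (those added in the previous iteration), not the entire set $M(u,j)$. I would need to argue that this does not lose anything — specifically, that if a feasible prefix mapping on $(v_0 \leadsto v_{k-1})$ is present in $M(v_{k-1})$ by the \emph{end} of iteration $k-1$, then either it was already propagated to $M(v_k)$ in an earlier iteration (and partial maps, once inserted, are never removed, so $M(v_k)$ only grows) or it is freshly "new" at the start of iteration $k$ and hence gets extended then. Concretely, a feasible mapping on the length-$(k-1)$ prefix path first appears in $M(v_{k-1})$ no later than iteration $k-1$ (by the induction hypothesis it is there by then; and one checks it cannot appear before iteration $k-1$ is relevant only for timing, not correctness), so it is marked \emph{new} during some iteration $j \le k-1$ and is extended across $(v_{k-1},v_k)$ during that same iteration $j \le k-1 < k$; since $M(v_k)$ is never pruned, the extension persists through iteration $k$. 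Combined with Lemma~\ref{lemma_1} this yields the inductive step. Finally, I would close by noting that taking $v = t$ and $k$ equal to the length of a shortest (or any) $(s\leadsto t)$ resource path — which is at most $|V_R|-1$ for any simple path — shows that $M(t,|P_J|)$ captures all feasible full mappings by the time the algorithm's $|V_R|-1$ iterations complete, which is what the subsequent correctness theorem needs.
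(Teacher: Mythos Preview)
Your proposal is correct and follows essentially the same route as the paper: induction on the path length $k$, with the initialization phase handling $k=0$ and Lemma~\ref{lemma_1} driving the inductive step via the call to \textbf{Relax}$(v_{k-1},v_k)$ in iteration $k$. Your treatment of the ``new mapping'' bookkeeping and monotonicity of $M(\cdot)$ is in fact more careful than the paper's own proof, which tacitly relies on the same facts but does not spell them out.
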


\begin{proof}
We will prove by induction on $k$. When $k=0$, i.e.\
after the initialization phase, $M(v_0,i)$ or $M(s,i), 0 \leq i \leq
|P_J|$ contains the feasible $i$-length prefix with first $i$ vertices
of $P$ mapped on $s$. So the basis is true.

Now let us assume that after $i-1$ iterations, $0 < i \leq k$,
$M(v_{i-1})$ contains all feasible mappings of different lengths on
the $(s\leadsto v_{i-1})$ portion of the $(s\leadsto v_k)$ path. Since
each edge in $E_R$ is considered once in each iteration,
$Relax(v_{i-1}, v_i)$ must be called in the $i$th iteration too.
So, by Lemma~\ref{lemma_1}, we can conclude that all feasible prefix
mappings of $P_J$ on the $(s \leadsto v_i)$ path is included in
$M(v_i)$.  
\end{proof}

\begin{thm}
\label{thm_3}
After $|V_R|-1$ iterations of the outer loop in line $7$ algorithm
$Pathmap$, for each node $v \in V_R$, $M(v)$ contains all feasible
mappings of different length prefixes of $P_J$ on all possible $s
\leadsto v$ paths.
\end{thm}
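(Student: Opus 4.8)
The plan is to derive the theorem as a direct consequence of Lemma~\ref{lemma_2}, using the standard Bellman--Ford observation that after $|V_R|-1$ relaxation passes every relevant path prefix has been processed. First I would fix an arbitrary node $v \in V_R$ and an arbitrary feasible mapping of some length-$\ell$ prefix of $P_J$ onto some $(s \leadsto v)$ path $Q$ in $G_R$. The key structural fact to invoke is that, because the map-set of a node only ever records prefix-mappings realized along \emph{simple} resource paths (no resource vertex can host two program nodes whose combined demand exceeds its unit-sharing capacity is not quite the reason here; rather, a shortest witnessing resource path can always be taken simple), it suffices to consider the case where $Q$ is a simple path. A simple path in $G_R$ has length at most $|V_R|-1$.

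Next I would apply Lemma~\ref{lemma_2} to this simple path $Q = (v_0 = s, v_1, \ldots, v_k = v)$ with $k \le |V_R|-1$: after the $k$th iteration of the outer loop in line~7, $M(v)$ already contains all feasible mappings of every length prefix of $P_J$ realizable along $Q$, and in particular the mapping we fixed. Since map-sets are only ever added to and never pruned (in the optimal-seeking version, with the early-termination lines~10--12 of Relax omitted), whatever is present after iteration $k$ remains present after iteration $|V_R|-1 \ge k$. As $v$ and the feasible prefix-mapping were arbitrary, this shows $M(v)$ contains \emph{all} feasible prefix-mappings over \emph{all} $(s \leadsto v)$ paths after $|V_R|-1$ iterations. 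Specializing $v = t$ and $\ell = |P_J|$ then gives the correctness statement promised at the start of the subsection, though that specialization is really a corollary rather than part of this theorem's proof.

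The main obstacle is the reduction to simple paths: one must argue that if a $k$-length prefix of $P_J$ maps feasibly onto some walk from $s$ to $v$, then it also maps feasibly onto some \emph{simple} $(s \leadsto v)$ path, so that the bound $k \le |V_R|-1$ applies. The cleanest way I see is to note that the algorithm's Relax/Extend routines never require the resource path to revisit a vertex --- each relaxation extends a prefix by appending one new resource vertex $v$ and assigning it a (possibly empty) block of consecutive program nodes --- so the paths implicitly enumerated are exactly the simple ones, and a feasible mapping along a non-simple walk can be short-circuited at any repeated vertex (the capacity at that vertex only has to absorb the union of program nodes assigned to it, and a shortcut assigns no more). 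I would state this monotonicity-and-simplicity remark explicitly, then let Lemma~\ref{lemma_2} and the no-pruning property close the argument in two lines.
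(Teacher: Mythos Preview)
Your approach is essentially identical to the paper's: invoke Lemma~\ref{lemma_2} together with the observation that no simple path in $G_R$ has length exceeding $|V_R|-1$, so after $|V_R|-1$ relaxation passes every $(s\leadsto v)$ path has been covered. The paper's proof is a two-line version of exactly this, omitting the monotonicity remark and the reduction-to-simple-paths discussion you spell out; your added care about short-circuiting non-simple walks is sound but goes beyond what the paper bothers to justify.
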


\begin{proof}
Since there is no simple path longer than $|V_R| - 1$,
according to Lemma~\ref{lemma_2}, all such paths will be covered by
the $Relax$ procedure after $|V_R|-1$ iterations.
\end{proof}

The fact that after termination of $Pathmap$, $M(t)$ contains all the
feasible maps of $P_J$ on possible $(s \leadsto t)$ paths, follows
directly from Theorem~\ref{thm_3} with inclusion of lines $7$ to $12$ in
the {\em Relax} procedure.

\SubSection{Complexity of the algorithm}
\label{subsec:complexity}
The problem size parameters are $|V_R| \equiv n$, $|E_R| \equiv e$ and
$|P_J| \equiv p$. The outer loop of {\em Pathmap} is iterated $n-1$
times and each iteration considers each of the $e$ edges exactly
once. So, the {\em Relax} procedure is called $ne$ times. In each
relaxation over an edge $(u,v)$, each of the $p$ prefix mappings from
$M(u)$ is tried for relaxation into some of the $p$ mappings in
$M(v)$. A $j$ length prefix in $M(u,j)$ is tried for relaxation into
$p-j$ of the $M(u,i), j\leq i \leq p$, and each trial requires $(i-j)$
computations of constant complexity for the extension. Let $S$ be the
maximum number of entries in the set of mappings $M(u,j), u \in V_R,
0\leq j \leq p$. Note that only the new entries are relaxed in each
iteration. However, the upper bound on the number of entries relaxed
per $M(u,j)$ will be $S$. So, the complexity of {\em Relax(u,v)} is --

\begin{eqnarray}
S\sum_{j=0}^{p-1}{\left( \sum_{x=1}^{p-j-1}{x} + 1 \right) } & = &
S\left(\frac{5}{12}p^3 + \frac{1}{4}p^2 + \frac{2}{3}p\right)
\nonumber \\
& = & O\left(\frac{5}{12}p^3{S}\right) \nonumber
\end{eqnarray}

So, the overall time complexity of the algorithm becomes
$O(nep^3S)$. We see that the sets $M(u,j)$ are creating the major load
on both time and memory complexity of the algorithm. Therefore,
restricting the growth of $S$ within polynomial limit would possibly
result in a polynomial time approximation algorithm.

\SubSection{Distributed version of the algorithm}
The centralized algorithm can be easily extended to a distributed
version, where each node $u$ in the resource network $G_R$ will
maintain the data structure $M(u)$ of partially computed
mappings. Also, node $u$ will be responsible for computing the
relaxation to each of its neighbors $v$ in $G_R$. The extended
mappings are then transmitted to $v$. The relaxation procedure is
invoked by a node $u$ when any new mapping arrives from any of its
incoming neighbors. The algorithm is formally laid out in
Algorithm~\ref{alg:pathmap_distr}. Upon arrival of a map message $m$,
a node $u$ process the message using the algorithm {\em ProcessMap(u,
m)}.  It follows from the correctness of the centralized algorithm
that the distributed mapping completes after at most $N-1$ {\em
ProcessMap} invocation by each node in the graph. The distributed
mapping algorithm can be terminated by force as soon as the terminal
nodes receives a complete mapping. Otherwise, the algorithm terminates
after all the outstanding {\em ProcessMap} have been completed. Since
cycles are avoided during extension, an initial mapping may be
extended at most $N-1$ times. Thus there will be a finite number of
{\em ProcessMap} invocation and the algorithm will terminate after a
finite amount of time.

\begin{algorithm}[h]
    \begin{algorithmic}[1]
       \STATE{Map message contains the mapping of computation nodes
       $0$,$1$,$2$, ... , $j$ on resource nodes. The first message to a
       node contains the requirement definition of the computation too}  
       \STATE{$j = |m|$}
       \IF{$u == t$}
          \STATE{$m_x = $ Extend($m$, j, $|P_J|-j$, u)}
	  \IF{$m_x \neq null$}
	     \STATE{terminate the algorithm with $m_x$ as
                   result}
	  \ENDIF	
       \ELSE
          \FOR{$x=0$ to $|P_J|-j-1$} 
	     \STATE{$m_x = $ Extend(m, j, x, u)}
	     \IF{$m_x \neq null$}
	          \FOR {each neighbor $v$ of $u$ that is not already in $m$}
	             \IF {$B_{req}(j+x, j+x+1) \leq B_{av}(u,v)$} 
	                 \STATE{extend $m_x$ to $m_xx$ by appending a map
       of $0$ computations on node $v$}
		         \STATE{{\bf send} $m_xx$ to $v$}
		     \ENDIF
		  \ENDFOR   
	     \ENDIF
          \ENDFOR
       \ENDIF
   \end{algorithmic}
    \caption{\em ProcessMap(u, m)}
    \label{alg:pathmap_distr}
\end{algorithm}

\SubSection{Heuristic Approaches to Reduce Complexity}
\label{subsec:heuristic}
Computational complexity of both the centralized and the distributed
path mapping algorithm grows exponentially with the problem
size. Therefore, for practical deployment, we need some heuristic that
produces good approximation to the optimal result. Here we discuss
three possible heuristics that modifies the original algorithm to
reduce computational, messaging and memory complexity.

\SubSubSection{LeastCostMap}
One major source of growth in complexity of the algorithm is the
exponential growth of the set of partial maps maintained for each
node. In the {\em LeastCostMap} heuristic, only one partial map of
each prefix-length is maintained for each node. If a new map is generated,
the cost of the new map in terms of the additive quality metric is
compared with that of the already stored one, and the map with higher
cost is discarded. This policy reduces the complexity to $O(p^3)$.

Similar policy can be applied to the distributed version of the
algorithm. However, in the distributed case, a map message is expanded
to its neighbors as soon as the message is received. So, if a higher
cost map message is arrived before a lower cost one, the processing of
the higher cost message cannot be pruned. However, in  most cases,
higher cost messages arrive later, so they are pruned.

We have implemented both the centralized and distributed version of
the original algorithm and also the LeastCostMap heuristic. The
algorithms are then applied on random topologies generated by the
BRITE Internet topology generator~\cite{Brite2001} and randomly
generated dataflow paths. Due to the huge computational complexity of
the exact algorithm, it was not possible to run it for networks larger
than $50$ nodes. For these networks, the heuristic is able to find the
optimal solution in $99\%$ of the cases, with $100$ to $1000$ fold
reduction in the size of the set of partial maps. For similar
topologies, the distributed version of the heuristic produced optimal
result in more than $99\%$ cases and total number of message exchange
was reduced approximately $100$ fold.
  
\SubSubSection{AnnealedLeastCostMap}
One way of trading off between optimality and complexity of the {\em
LeastCostMap} heuristic is to apply a simulated annealing approach
to decide whether to discard a higher cost partial map from the set
in presence of a lower cost map. As the temperature of the
process anneals, i.e. at the later iterations, the probability of
keeping a non-minimal partial solution will decrease. Definitely this
approach increases the computation and message complexity. However,
this allows some of the non-minimal partial solutions to grow and
possibly lead to a better complete solution.

\SubSubSection{RandomNeighbor}
Another way of restricting the message complexity is to extend any
partial map to a randomly chosen subset of $k$ neighbors instead of
expanding to all of them. Higher values of $k$ increases the chance of
getting the optimal solution. The {\em RandomNeighbor} heuristic with
$k=1$ did not produce results as good as LeastCostMap, although number
of messages were reduced dramatically. Further investigation need to
be done to determine a suitable value of $k$.

\Section{Conclusion}
\label{sec:conclusion}
In this paper we have developed and explained a decentralized
algorithm to compute the optimal mapping of computational capacity and
network bandwidth requirement of a data-flow computation. Many
high-throughput scientific research platforms need to support
applications that resemble data-flow computation. The discussion
presented in this paper provides in-depth understanding of the
resource allocation problem for such computations and demonstrates the
way to develop cost-effective solutions. At this point, the algorithm
supports computations with path-topology only. Several interesting
applications such as complex continual queries on data stream
originating from multiple sites, resemble a tree topology. A possible
extension of this work is to modify the algorithm such that mapping of
flow-computations with different topologies can be obtained.

\singlespace
\vspace{-.1cm}
\balance
\bibliographystyle{latex8}
\bibliography{hpcs}

\end{document}